\documentclass[12pt]{article}

\usepackage{amsmath,verbatim,amssymb,epsfig}

\usepackage{graphics,graphicx}
\usepackage{longtable}
\usepackage{amsmath}
\usepackage{bm}
\usepackage{placeins}

\usepackage{amsmath,natbib}
\usepackage[dvipsnames,usenames]{color}
\usepackage{epstopdf}

\newtheorem{proposition}{Proposition}

\newtheorem{corollary}{Corollary}
\newtheorem{defin}{\sc Definition}
\newenvironment{definition}{\begin{defin}\rm}{\end{defin}}
\newenvironment{proof}{\noindent{\sc Proof.}}{$\diamond$}



\def\be{\mbox{Be}}
\def\bin{\mbox{Bin}}

\def\dir{\mbox{Dir}}
\def\mult{\mbox{Mult}}

\def\E{\mbox{E}}
\def\V{\mbox{Var}}
\def\Cov{\mbox{Cov}}

\def\Cv{\mbox{Cov}}

\def\d{\mbox{d}}

\def\bp{{\bf p}}

\def\bw{{\bf w}}

\newcommand{\KL}{\mathrm{KL}}
\newcommand{\bbeta}{\boldsymbol{\beta}}

\newcommand{\Ree}{{\rm I}\!{\rm R}}

\newcommand{\DP}{\mathcal{DP}}

\newcommand{\AC}{\mathcal{A}}
\newcommand{\BB}{\mathcal{B}}

\newcommand{\YY}{\mathcal{Y}}

\newcommand{\PT}{\mathcal{PT}}

\begin{document}

\title{{\bf Characterising variation of nonparametric random probability measures using the Kullback-Leibler divergence}}
\author{
  {\sc  James Watson$^{1}$}, 
  {\sc  Luis Nieto-Barajas$^{1,2}$} and
  {\sc  Chris Holmes$^{1}$} \\
  {\sl {\small $^1$Department of Statistics, University of Oxford, UK}} \\ 
  {\sl {\small $^2$Department of Statistics, ITAM, Mexico}}}
\date{}

\maketitle

\begin{abstract}
This work studies the variation in Kullback-Leibler divergence between random draws from some popular nonparametric processes and their baseline measure. In particular we focus on the Dirichlet process, the P\'olya tree and the frequentist and Bayesian bootstrap. The results shed light on the support of these nonparametric processes. Of particular note are results for finite P\'olya trees that are used to model continuous random probability measures. Our results provide guidance for specifying the parameterisation of the P\'olya tree process that allows for greater understanding while highlighting limitations of the standard canonical choice of parameter settings.\end{abstract}
\noindent {\sl Keywords}: Bayesian nonparametrics, Kullback-Leibler divergence, bootstrap methods, P\'olya trees.

\section{Introduction}
\label{sec:intro}

Random probability models are key components of Bayesian nonparametrics \citep{hjort&al:10,ghosh:03,muller:04}  used to express prior beliefs with wide support. Bayesian nonparametrics has become increasingly popular in recent years due to the flexible modelling structures it supports and alleviating concerns over the ``closed hypothesis space'' of Bayesian inference. The most commonly used processes are the Dirichlet process prior or generalizations of it, and the P\'olya tree prior (PT), which includes the Dirichlet process as a special case, some of the main references being: \cite{ferguson:73,lavine:92,hjort&al:10}. Both the Dirichlet and the P\'olya tree priors are of particular interest because of their analytical tractability and their conjugacy properties for inference problems.

The properties of these processes are usually given at the level of characterizing their mean and variance when defining the process around a particular centring distribution $F_0$. For instance, if we have a random distribution $F$ with a Dirichlet process law, denoted $F\sim\DP(\alpha,F_0)$, where $F_0$ is the centring distribution, then $\E(F)=F_0$, and $\alpha$ is a precision parameter that controls the dispersion of $F$ from $F_0$. Similarly, if $F$ is a random distribution with law governed by a P\'olya tree process, using notation from \citet{hanson:06}, $F\sim\PT(\alpha,\rho,F_0)$, where $\rho$ denotes the precision function, then selecting $\alpha$ and a partition structure $\Pi$ that defines the tree, the draws will be centred around $F_0$, and $\alpha$ is again the precision parameter. Moreover, the precision function $\rho(\cdot)$ controls the speed at which the variance of the branching probabilities that define the PT increase or decrease. \cite{lavine:92} recommends $\rho(m)=m^2$ as a ``sensible canonical choice'', which as been adopted as the standard choice in the vast majority of applications, see for example \cite{karabatsos:06,muliere:97,walker:99,hanson&johnson:02,walker&mallick:97}. In practical applications when using P\'olya trees in Bayesian inference for example, it is also necessary to truncate the tree at a certain level $M$. One consequence of our work allows better insight for both choosing the truncation level $M$ and for choosing the function parameter $\rho(\cdot)$.

More generally we consider the general question of how far a random draw $F$, is from a specific centring distribution $F_0$. We also ask ourselves whether it is possible to set the parameters of the model in order to sample distributions at a specific divergence from $F_0$. In this note we provide some guidance on how to answer these questions using the most common measure of divergence between densities, the Kullback-Leibler (KL) divergence \citep{kullback&leibler:51}. We concentrate on this divergence for its fundamental role played in information theory and Bayesian statistics \citep[e.g.][]{kullback:97,bernardo:94,cover&thomas:91}.

Section 2 introduces some notation and defines the P\'olya tree as the principal model considered. We also consider the Bayesian and frequentist bootstrap procedures in section 3. Section 3 presents several properties of the KL divergence, considering random draws of some random probability models. Section 4 concludes with a discussion on the implications of these findings.

\section{Notation}
\label{sec:notation}

The P\'olya tree will be our main object of interest, particularly as the Dirichlet process can be seen as a particular case of a P\'olya tree, see \cite{ferguson:74}. We define it as follows. 

The P\'olya tree relies on a binary partition tree of the sample space. For simplicity of exposition we consider $(\Ree,\BB)$ as our measurable space with $\Ree$ the real line and $\BB$ the Borel sigma algebra of subsets of $\Ree$. Using the notation in \cite{nieto&mueller:08}, the binary partition tree is denoted by $\Pi=\{B_{mj}: m\in\mathbb{N}, j=1,..,2^m\}$, where the index $m$ denotes the level in the tree and $j$ the location of the partitioning subset within the level. The sets at level 1 are denoted by 
$(B_{11},B_{12})$; the partitioning subsets of $B_{11}$ are $(B_{21},B_{22})$, and $B_{12} = B_{23} \cup B_{24}$, such that $(B_{21},B_{22},B_{23},B_{24})$ denote the sets at level 2. In general, at level $m$, the set $B_{mj}$ splits into two disjoint sets $(B_{m+1,2j-1},B_{m+1,2j})$, where $B_{m+1,2j-1}\cap B_{m+1,2j}=\emptyset$ and $B_{m+1,2j-1}\cup B_{m+1,2j}=B_{mj}$.

We associate random branching probabilities $Y_{mj}$ with every set $B_{mj}$. 
We will use $F$ to denote a cdf or a probability measure in-distinctively, and $f$ to denote a density. We define $Y_{m+1,2j-1} = F(B_{m+1,2j-1} \mid B_{mj})$, and $Y_{m+1,2j}=1-Y_{m+1,2j-1} = F(B_{m+1,2j} \mid B_{mj})$. We denote by $\YY=\{Y_{mj}\}$ the set of random branching probabilities associated with the elements of $\Pi$. 

\begin{definition}
\citep{lavine:92}. Let $\AC_m=\{\alpha_{mj},\, j=1,\ldots,2^m\}$ be non-negative real numbers, $m=1,2,\ldots,$ and let $\AC=\bigcup \AC_m$. A random probability measure $F$ on $(\Ree,\BB)$ is said to have a P\'olya tree prior with parameters $(\Pi,\AC)$, if for $m=1,2,\ldots$ there exist random variables $\YY_m=\{Y_{m,2j-1}\}$ for $j=1,\ldots,2^{m-1}$, such that the following hold: 
\begin{enumerate}
\item All the random variables in $\YY=\cup_m\{\YY_m\}$ are independent. 
\item For every $m=1,2,\ldots$ and every $j=1,\ldots,2^{m-1}$, $Y_{m,2j-1}\sim\be(\alpha_{m,2j-1},\alpha_{m,2j})$.
\item For every $m=1,2,\ldots$ and every $j=1,\ldots,2^m$ 
$$F(B_{mj})=\prod_{k=1}^m Y_{m-k+1,j_{m-k+1}^{(m,j)}},$$
where $j_{k-1}^{(m,j)}=\lceil j_k^{(m,j)}/2 \rceil$ is a recursive decreasing formula, whose initial value is $j_{m}^{(m,j)}=j$, that locates the set $B_{mj}$ with its ancestors upwards in the tree. $\lceil\cdot\rceil$ denotes the ceiling function, and $Y_{m,2j}=1-Y_{m,2j-1}$ for $j=1,\ldots,2^{m-1}$.
\end{enumerate}
\end{definition}

There are several ways of centring the process around a parametric probability measure $F_0$. The simplest and most used method \citep{hanson&johnson:02} consists of matching the partition with the dyadic quantiles of the desired centring measure and keeping $\alpha_{mj}$ constant within each level $m$. More explicitly, at each level $m$ we take  
\begin{equation}
\label{bmj}
B_{mj}=\left(F_0^{-1}\left(\frac{j-1}{2^m}\right),F_0^{-1}\left(\frac{j}{2^m}\right)\right],
\end{equation}
for $j=1,\ldots,2^m$, with $F_0^{-1}(0)=-\infty$ and $F_0^{-1}(1)=\infty$. If we further take $\alpha_{mj}=\alpha_m$ for $j=1,\ldots,2^m$ we get $\E\{F(B_{mj})\}=F_0(B_{mj})$. 

In particular, we take $\alpha_{mj}=\alpha\rho(m)$, so that the parameter $\alpha$ can be interpreted as a precision parameter of the P\'olya tree \citep{walker&mallick:97}, and the function $\rho$ controls the speed at which the variance of the branching probabilities moves down in the tree. According to \cite{ferguson:74}, $\rho(m)=1/2^m$ defines an a.s. discrete measure that coincides with the Dirichlet process \citep{ferguson:73}, and $\rho(m)=1$ defines a continuous singular measure. Moreover, if $\rho$ is such that $\sum_{m=1}^\infty\rho(m)^{-1}<\infty$ it guarantees that $F$ is absolutely continuous \citep{kraft:64}, e.g., $\rho(m)=m^2,m^3,2^m,4^m$.

In practice we need to stop partitioning the space at a finite level $M$ to define a finite tree process. At the lowest level $M$, we can spread the probability within each set $B_{Mj}$ according to $f_0$. In this case the random probability measure defined will have a density of the form
\begin{equation}
f(x)=\prod_{m=1}^M Y_{m,j_m^{(X)}}2^M f_0(x),
\end{equation}
for $X\in\Ree$, and with $j_m^{(X)}$ identifying the set at level $m$ that contains $X$. This maintains the condition $\E(f)=f_0$. We denote a finite P\'olya tree process as $\PT_M(\alpha,\rho,F_0)$. Taking $M\to\infty$ defines a draw from a P\'olya tree.

Let us consider a set of functions $\rho(m)$ of the following types: 
\begin{equation}
\label{eq:rho}
\rho_1(m)=1/2^m,\;\;\rho_2(m)=1,\;\;\rho_3(m)=m^{\delta},\;\;\rm{and}\;\;\rho_4(m)=\delta^m, 
\end{equation}
where $\delta>1$, to define discrete, singular and two absolutely continuous measures, respectively. 

To measure ``distance'' between probability distributions, we concentrate on the Kullback-Leibler divergence, which for densities $f$ and $g$ is defined as
\begin{equation}
\label{eq:kl}
\KL(f||g)=\E_f\left[\log\left\{\frac{f(x)}{g(x)}\right\}\right]=\int\log\left\{\frac{f(x)}{g(x)}\right\}f(x)\d x.
\end{equation}

\section{Properties}\label{sec:prop}

\subsection{P\'olya Trees}

If $F\sim\PT_M(\alpha,\rho,F_0)$ then it is not difficult to show that the KL between the centring distribution $F_0$ and a random draw $F$ is a random variable that does not depend on $F_0$, and is given by:
\begin{equation}
\label{eq:kl1}
\KL(f_0 || f)=-\sum_{m=1}^M\sum_{j=1}^{2^m}(\log Y_{mj})\frac{1}{2^m}-M\log 2.
\end{equation}
Since the KL divergence measure is asymmetric, we can reverse the role of $f$ and $f_0$. In this case the reverse KL divergence becomes:
\begin{equation}
\label{eq:kl2}
\KL(f || f_0)=\sum_{m=1}^M\sum_{j=1}^{2^m}(\log Y_{mj})\prod_{k=1}^{m}Y_{m-k+1,j_{m-k+1}^{(m,j)}}+M\log 2.
\end{equation}

We now present some results that characterize the first two moments of these divergences. 

\begin{proposition}
\label{prop:1}
Let $F\sim\PT_M(\alpha,\rho,F_0)$. Then the Kullback-Leibler divergence between $f_0$ and $f$, defined in \eqref{eq:kl1}, has mean and variance given by
$$\E\{\KL(f_0 || f)\}=\sum_{m=1}^M\left\{\psi_0(2\alpha\rho(m))-\psi_0(\alpha\rho(m))-\log 2\right\}$$
and 
$$\V\{\KL(f_0 || f)\}=\sum_{m=1}^M\frac{1}{2^m}\left\{\psi_1(\alpha\rho(m))-2\psi_1(2\alpha\rho(m))\right\},$$
where $\psi_0(\cdot)$ and $\psi_1(\cdot)$ denote the digamma and trigamma functions respectively\footnote{The digamma function is defined as the logarithmic derivative of the gamma function, i.e. $\psi_0(x) = \frac{\mathrm{d}}{\mathrm{d}x} \log \Gamma(x) = \frac{\Gamma'(x)}{\Gamma(x)}$. In similar fashion, the trigamma function is defined as the second derivative.}. 
\end{proposition}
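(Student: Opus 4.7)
The plan is to compute the mean by linearity and the variance by exploiting the level-wise independence structure of $\YY$, reducing everything to moments of $\log Y$ and $\log(1-Y)$ for a symmetric beta $Y$.

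For the mean, I would start from \eqref{eq:kl1} and take expectations term by term. Under the specified parameterisation $\alpha_{mj}=\alpha\rho(m)$ we have $Y_{m,2j-1}\sim\be(\alpha\rho(m),\alpha\rho(m))$, and since the beta is symmetric in its two parameters, $Y_{m,2j}=1-Y_{m,2j-1}$ also marginally has the same $\be(\alpha\rho(m),\alpha\rho(m))$ distribution. Using the standard identity $\E[\log Y]=\psi_0(a)-\psi_0(a+b)$ for $Y\sim\be(a,b)$, every term $\E[\log Y_{mj}]$ at level $m$ equals $\psi_0(\alpha\rho(m))-\psi_0(2\alpha\rho(m))$. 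The inner sum $\sum_{j=1}^{2^m}2^{-m}$ collapses to $1$ at each level, and the claimed mean formula follows directly.

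For the variance, I would first note the independence structure: the pairs $\{(Y_{m,2j-1},Y_{m,2j})\}$ are mutually independent across both $j$ and $m$, so the variance decomposes as a sum over levels and, within each level, as a sum over the $2^{m-1}$ pairs. Writing $Y\sim\be(\alpha\rho(m),\alpha\rho(m))$, the pair contribution is $\V[\log Y+\log(1-Y)]$, giving
$$\V\{\KL(f_0\|f)\}=\sum_{m=1}^M\frac{1}{4^m}\cdot 2^{m-1}\,\V[\log Y+\log(1-Y)]=\sum_{m=1}^M\frac{1}{2^{m+1}}\V[\log Y+\log(1-Y)].$$

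The main technical step is computing $\V[\log Y+\log(1-Y)]$ for a symmetric beta. I would obtain it from the joint cumulant generating function
$$K(s,t)=\log\E[Y^{s}(1-Y)^{t}]=\log\Gamma(a+s)+\log\Gamma(b+t)-\log\Gamma(a+b+s+t)+\text{const},$$
whose second derivatives at $(0,0)$ yield $\V(\log Y)=\psi_1(a)-\psi_1(a+b)$, $\V(\log(1-Y))=\psi_1(b)-\psi_1(a+b)$, and the key covariance $\Cov(\log Y,\log(1-Y))=-\psi_1(a+b)$. Specialising to $a=b=\alpha\rho(m)$ gives $\V[\log Y+\log(1-Y)]=2\psi_1(\alpha\rho(m))-4\psi_1(2\alpha\rho(m))$; substituting into the sum above and cancelling the factor of $2$ with $2^{m+1}\to 2^m$ produces the stated variance. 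The one subtlety worth flagging is the negative covariance term coming from the constraint $Y_{m,2j-1}+Y_{m,2j}=1$, which is precisely what produces the $-2\psi_1(2\alpha\rho(m))$ rather than the naive independent-variables answer $-\psi_1(2\alpha\rho(m))$.
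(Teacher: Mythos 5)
Your proposal is correct and follows essentially the same route as the paper: the mean via $\E[\log Y]$ for a symmetric $\be(\alpha\rho(m),\alpha\rho(m))$ variable, and the variance via independence across levels and across pairs together with the within-pair covariance $\Cov\{\log Y,\log(1-Y)\}=-\psi_1(2\alpha\rho(m))$, which is exactly the term the paper highlights. You merely make explicit the pair-counting bookkeeping and derive the covariance from the joint cumulant generating function, steps the paper leaves implicit.
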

\begin{proof}
The expected value follows by noting that the geometric mean of a beta random variable is $\E(\log Y_{mj})=\psi_0(2\alpha\rho(m))-\psi_0(\alpha\rho(m))$. For the variance, we use the fact that the random variables $Y_{mj}$ are independent across $m$, and for the same $m$, $Y_{mj}$ and $Y_{mk}$ are independent for $|k-j|>1$. Noting that $\V(\log Y_{mj})=\psi_1(\alpha\rho(m))-\psi_1(2\alpha\rho(m))$ and since $Y_{m,2j}=1-Y_{m,2j-1}$, for $j=1,\ldots,2^{m-1}$, with $\Cov\{\log Y_{m,2j-1},\log(1-Y_{m,2j})\}=-\psi_1(2\alpha\rho(m))$, the result follows. 
\end{proof}

We now concentrate on the limiting behaviour of the expected KL value as a function of the finite tree level $M$. For some cases of the function $\rho(\cdot)$ this limit is finite. This is given in the following corollary. 

\begin{corollary}\label{cor:KLbound}
Let $\mathcal{E}_M:=\E\{KL(f_0 || f)\}$, given in Proposition \ref{prop:1}, to make explicit the dependence on the maximum level $M$. 
For the families $\rho_3(m)$ and $\rho_4(m)$ in expression (\ref{eq:rho}), the limit of the expected KL divergence, as $M\rightarrow\infty$, is bounded respectively by:
\begin{equation}
 \lim_{M\rightarrow\infty} \mathcal{E}_M \leq \frac{1}{4\alpha} \zeta(\delta) + \frac{1}{\alpha^2} \zeta(\delta^2)  
\end{equation}
\begin{equation}
\lim_{M\rightarrow \infty} \mathcal{E}_M  \leq \frac{\alpha(\delta+1) + 4}{4\alpha^2 (\delta^2 - 1)} 
\end{equation}
where $\delta$ is defined as in (\ref{eq:rho}), and $\zeta(\delta)= \sum_{n=1}^{\infty} n^{-\delta}$, is the Riemann zeta function.
\end{corollary}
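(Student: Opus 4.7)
The plan is to combine the closed form for $\mathcal{E}_M$ supplied by Proposition \ref{prop:1} with an elementary upper bound on the digamma function, and then evaluate the resulting convergent series for the two families $\rho_3$ and $\rho_4$ separately.

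First, I would establish a bound on the generic summand $g(x) := \psi_0(2x) - \psi_0(x) - \log 2$ of the form $g(x) \leq \frac{1}{4x} + \frac{1}{x^2}$ for every $x > 0$. The cleanest route is via the classical Stirling-type inequalities
$$\log x - \frac{1}{2x} - \frac{1}{12x^2} < \psi_0(x) < \log x - \frac{1}{2x}, \qquad x > 0,$$
which follow from Binet's integral representation. Substituting $2x$ into the upper bound and $x$ into the lower bound and subtracting yields $g(x) \leq \frac{1}{4x} + \frac{1}{12x^2}$, which is in turn dominated by $\frac{1}{4x} + \frac{1}{x^2}$. An alternative derivation uses the duplication identity $\psi_0(2x) = \tfrac12 \psi_0(x) + \tfrac12 \psi_0(x + 1/2) + \log 2$ to rewrite $g(x) = \tfrac14 \sum_{n=0}^{\infty} [(x+n)(x+n+1/2)]^{-1}$, from which the same conclusion follows by a telescoping partial-fraction argument.

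Next, substituting $x = \alpha\rho(m)$ and interchanging the bound with the sum over $m$ gives
$$\lim_{M\to\infty} \mathcal{E}_M \;\leq\; \frac{1}{4\alpha}\sum_{m=1}^{\infty}\frac{1}{\rho(m)} \;+\; \frac{1}{\alpha^2}\sum_{m=1}^{\infty}\frac{1}{\rho(m)^2}.$$
For $\rho_3(m) = m^\delta$ with $\delta > 1$, each series is a Riemann zeta value in the exponents $\delta$ and $2\delta$ respectively, which gives the first stated bound. For $\rho_4(m) = \delta^m$ with $\delta > 1$, both series are geometric with sums $(\delta-1)^{-1}$ and $(\delta^2-1)^{-1}$; placing the result over the common denominator $4\alpha^2(\delta^2-1) = 4\alpha^2(\delta-1)(\delta+1)$ recovers the advertised form $[\alpha(\delta+1) + 4] / [4\alpha^2(\delta^2-1)]$.

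The main obstacle is the first step: verifying that the digamma inequality holds over the whole positive half-line, since for small $\alpha\rho(m)$ (for instance at $m=1$ with modest $\alpha$) the asymptotic expansion of $\psi_0$ is crude and one must rely on the non-asymptotic Stirling bounds above. The remaining work is routine: the interchange of sum and bound is justified because every summand is non-negative, convergence of $\sum \rho(m)^{-2}$ is automatic from Kraft's criterion $\sum \rho(m)^{-1} < \infty$, and the zeta and geometric evaluations are standard.
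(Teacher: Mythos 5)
Your proposal follows essentially the same route as the paper's one-line proof --- expand the digamma difference $\psi_0(2x)-\psi_0(x)-\log 2$ termwise and sum the resulting series over $m$ --- but you supply the rigour that the paper omits. The paper simply invokes the asymptotic expansion $\psi_0(x)=\log x-\tfrac{1}{2}x^{-1}-\mathcal{O}(x^{-2})$ and asserts the inequalities follow; as you correctly note, an asymptotic expansion gives no uniform bound at small arguments (e.g.\ $m=1$ with small $\alpha$), and your replacement by the enveloping Binet/Stirling inequalities $\log x-\tfrac{1}{2x}-\tfrac{1}{12x^2}<\psi_0(x)<\log x-\tfrac{1}{2x}$, valid on all of $x>0$, or equivalently the duplication-formula identity $\psi_0(2x)-\psi_0(x)-\log 2=\tfrac{1}{4}\sum_{n\geq 0}[(x+n)(x+n+1/2)]^{-1}$, is exactly what is needed to make the termwise bound $\tfrac{1}{4x}+\tfrac{1}{x^2}$ legitimate. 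One point you should not gloss over: for $\rho_3(m)=m^\delta$ your (correct) calculation gives $\sum_m \rho(m)^{-2}=\zeta(2\delta)$, so your argument delivers the bound $\tfrac{1}{4\alpha}\zeta(\delta)+\tfrac{1}{\alpha^2}\zeta(2\delta)$, not the stated $\tfrac{1}{4\alpha}\zeta(\delta)+\tfrac{1}{\alpha^2}\zeta(\delta^2)$; the two coincide only at the canonical $\delta=2$, and for $\delta>2$ the stated form does not follow from your termwise bound. Your exact recovery of the second displayed bound for $\rho_4$ (geometric sums over the common denominator $4\alpha^2(\delta^2-1)$) confirms that the intended generic bound is indeed $\tfrac{1}{4x}+\tfrac{1}{x^2}$, which strongly suggests the exponent $\delta^2$ in the first display is a typo for $2\delta$; you should say so explicitly rather than claim your series evaluation ``gives the first stated bound.''
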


\begin{proof}
The digamma function can be expanded as: $\psi_0(x) = \log x - (1/2)x^{-1} - \mathcal{O}(x^{-2})$, from which these inequalities follow.
\end{proof}

Taking instead the reverse KL, we have the following properties. 

\begin{proposition}
Let $F\sim\PT_M(\alpha,\rho,F_0)$. Then the Kullback-Leibler divergence between $f$ and $f_0$, defined in \eqref{eq:kl2}, has mean and variance given by
$$\E\{\KL(f || f_0)\}=\sum_{m=1}^M\left\{\psi_0(\alpha\rho(m)+1)-\psi_0(2\alpha\rho(m)+1)+\log 2\right\}$$
and 
$$\V\{\KL(f || f_0)\}=A+B,$$
where
$$A=\sum_{m=1}^M\left[\left\{\prod_{k=1}^m\left(\frac{\alpha\rho(k)+1}{2\alpha\rho(k)+1}\right)\right\}\lambda_5(m)-\left(\frac{1}{2}\right)^m\lambda_2^2(m)\right],$$
$$B=\sum_{m=1}^M\left(\left(\frac{\alpha\rho(m)}{2\alpha\rho(m)+1}\right)\left\{\prod_{k=1}^{m-1}\left(\frac{\alpha\rho(k)+1}{2\alpha\rho(k)+1}\right)\right\}\lambda_6(m)-\left(\frac{1}{2}\right)^m\lambda_2^2(m)\right.\hspace{3cm}$$
$$+\sum_{j=1}^{m-1}\left[\left(\frac{\alpha\rho(j)}{2\alpha\rho(j)+1}\right)\left\{\prod_{k=1}^{j-1}\left(\frac{\alpha\rho(k)+1}{2\alpha\rho(k)+1}\right)\right\}\lambda_2^2(m)-\left(\frac{1}{2}\right)^j\lambda_2^2(m)\right]\hspace{0cm}$$
$$+2\left\{\prod_{k=1}^{m-1}\left(\frac{\alpha\rho(k)+1}{2\alpha\rho(k)+1}\right)\right\}\sum_{j=m+1}^{M}\left\{\left(\frac{\alpha\rho(m)+1}{2\alpha\rho(m)+1}\right)\lambda_3(m)\lambda_2(j)\right.\hspace{1.1cm}$$
$$\left.\left.+\left(\frac{\alpha\rho(m)}{2\alpha\rho(m)+1}\right)\lambda_4(m)\lambda_2(j)-\lambda_2(m)\lambda_2(j)\right\}\right)\hspace{-7.8cm},$$
with 
$$\lambda_2(m)=\psi_0(\alpha\rho(m)+1)-\psi_0(2\alpha\rho(m)+1),$$
$$\lambda_3(m)=\psi_0(\alpha\rho(m)+2)-\psi_0(2\alpha\rho(m)+2),$$
$$\lambda_4(m)=\psi_0(\alpha\rho(m)+1)-\psi_0(2\alpha\rho(m)+2),$$
$$\lambda_5(m)=\psi_1(\alpha\rho(m)+2)-\psi_1(2\alpha\rho(m)+2)+\left\{\psi_0(\alpha\rho(m)+2)-\psi_0(2\alpha\rho(m)+2)\right\}^2,$$
$$\lambda_6(m)=\left\{\psi_0(\alpha\rho(m)+1)-\psi_0(2\alpha\rho(m)+2)\right\}^2-\psi_1(2\alpha\rho(m)+2).$$
\end{proposition}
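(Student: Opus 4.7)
The plan is to start from the decomposition $\KL(f\|f_0) - M\log 2 = \sum_{m=1}^M\sum_{j=1}^{2^m} F(B_{mj})\log Y_{mj}$ of \eqref{eq:kl2} and exploit the hierarchical representation $F(B_{mj}) = Y_{mj}\prod_{k=1}^{m-1}Y_{k,A_k^j}$, in which the ancestor factors are independent of $Y_{mj}$ and the only within-level dependence is the sibling identity $Y_{m,2k}=1-Y_{m,2k-1}$. Combined with a short catalogue of symmetric-Beta moments for $Y\sim\be(\alpha\rho(m),\alpha\rho(m))$ (namely $\E[Y]=1/2$, $\E[Y^2]=\omega_m/2$ with $\omega_m:=(\alpha\rho(m)+1)/(2\alpha\rho(m)+1)$, $\E[Y(1-Y)]=\tfrac12\,\alpha\rho(m)/(2\alpha\rho(m)+1)$, and the corresponding weighted-log versions whose tilted-Beta integrals are precisely what define $\lambda_2,\ldots,\lambda_6$), every relevant expectation reduces to products of these elementary factors. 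The mean then drops out immediately: $\E[F(B_{mj})\log Y_{mj}] = (1/2)^{m-1}\,\E[Y_{mj}\log Y_{mj}] = (1/2)^m\lambda_2(m)$, which summed over the $2^m$ indices at level $m$ equals $\lambda_2(m)$; adding $M\log 2$ recovers the stated expectation.

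For the variance I would expand $\V[\sum F(B_{mj})\log Y_{mj}]$ as a double sum of covariances and classify each pair $\{(m,j),(m',j')\}$, with $m\leq m'$, by the level $\ell$ of its deepest common ancestor in the partition tree. Diagonal pairs, handled by $\E[Y^2(\log Y)^2]$ (producing $\lambda_5$) together with the shared-ancestor product $\prod_{k=1}^{m-1}\E[Y_k^2]$, sum to $A$. At the same level $m$, sibling pairs ($\ell=m-1$, where $Y_{mj}$ and $Y_{mj'}$ are perfectly anti-correlated and one must use the joint moment $\E[Y(1-Y)\log Y\log(1-Y)]$ that packages $\lambda_6$) produce the first line of $B$; non-sibling pairs at the same level ($\ell\leq m-2$) combine $\E[Y^2]$ at levels $1,\ldots,\ell$, $\E[Y(1-Y)]$ at the splitting level $\ell+1$, and $\E[Y]^2 = 1/4$ at the disjoint descendant levels, producing the second line of $B$ after the change of variables $j=\ell+1$. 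Cross-level pairs $m<m'$ split further into ancestor-descendant ($\ell=m$, giving $\lambda_3(m)\lambda_2(m')$), sibling-of-descendant ($\ell=m-1$, giving $\lambda_4(m)\lambda_2(m')$), and disjoint-subtree ($\ell\leq m-2$) sub-cases, each a standardised product of one of $\lambda_3,\lambda_4,\lambda_2$ with $\lambda_2(m')$; the overall factor of $2$ in term$_3$ records that each distinct pair appears twice in the double-sum covariance expansion.

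The principal obstacle is collapsing the cross-level contributions into the compact shape of term$_3$. The key algebraic identity is
$$\prod_{k=1}^{m-1}\omega_k \;=\; 1 \;-\; \sum_{j=1}^{m-1}\frac{\alpha\rho(j)}{2\alpha\rho(j)+1}\prod_{k=1}^{j-1}\omega_k,$$
which follows by induction from $\omega_j + \alpha\rho(j)/(2\alpha\rho(j)+1) = 1$; it absorbs the $\ell\leq m-2$ disjoint-subtree pieces at levels $(m,m')$ into the single constant $-\lambda_2(m)\lambda_2(m')$ inside the braces of term$_3$, after which the $\lambda_3$ and $\lambda_4$ summands match the ancestor-descendant and sibling-of-descendant sub-cases directly. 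The combinatorial pair counts at each $\ell$ (e.g.\ $2^{m-1}$ sibling pairs, $2^{2m-\ell-2}$ same-level non-sibling pairs with common ancestor at level $\ell$, and $2^{m+m'-\ell-1}$ cross-level pairs) are powers of two that cancel the corresponding powers in the moment prefactors, so that the $\lambda_i$ weights emerge cleanly and the three inner summands of term$_3$ correspond precisely to the three cross-level sub-cases.
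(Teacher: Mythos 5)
Your proposal is correct and follows essentially the same route as the paper: decompose $\KL(f\|f_0)-M\log 2$ into the level-wise sums, compute the mean from $\E\{Y_{mj}\log Y_{mj}\}=\lambda_2(m)/2$, and obtain the variance by expanding into variances plus covariances using exactly the tilted-Beta moments that define $\lambda_2,\ldots,\lambda_6$. The paper's own proof leaves the covariance bookkeeping as ``algebra with patience,'' whereas you make it explicit by classifying pairs by deepest common ancestor and supplying the telescoping identity $\prod_{k=1}^{m-1}\omega_k = 1-\sum_{j=1}^{m-1}(1-\omega_j)\prod_{k<j}\omega_k$; just take care that your pair counts are stated for unordered pairs while the displayed formulas require the ordered (doubled) counts.
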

\begin{proof}
The expected value follows by using independence properties and by noting that $\E\{(\log Y_{mj})Y_{mj}\}=\lambda_2(m)/2$. For the variance, we first bring the variance operator within the sum by splitting it into the sum of variances of each element plus the sum of covariances\footnote{The variance of each element is defined in terms of first and second moments and rely on independence properties to compute them. Working out the algebra with patience and noting that $\E\{(\log Y_{mj})Y_{mj}\}=\lambda_2(m)/2$, $\E\{(\log Y_{mj})Y_{mj}^2\}=\frac{1}{2}\left(\frac{\alpha\rho(m)+1}{2\alpha\rho(m)+1}\right)\lambda_3(m)$, $\E\{(\log Y_{mj})Y_{mj}(1-Y_{mj})\}=\frac{1}{2}\left(\frac{\alpha\rho(m)}{2\alpha\rho(m)+1}\right)\lambda_4(m)$, $\E\{(\log Y_{mj})^2Y_{mj}^2\}=\frac{1}{2}\left(\frac{\alpha\rho(m)+1}{2\alpha\rho(m)+1}\right)\lambda_5(m)$, and $\E\{(\log Y_{mj})\log(1-Y_{mj})Y_{mj}(1-Y_{mj})\}=\frac{1}{2}\left(\frac{\alpha\rho(m)}{2\alpha\rho(m)+1}\right)\lambda_6(m)$, the result is obtained.}.
\end{proof}

Figures \ref{fig:klmeans} and \ref{fig:klsd} respectively illustrate the behaviour of the mean and standard deviation, as a function of the truncation level $M$ for the two KL measures \eqref{eq:kl1} (empty dots) and \eqref{eq:kl2} (solid dots). The four panels in each figure correspond to choices of $\rho(m) = 1/2^m, 1, m^{\delta}, \delta^m$, as given in \eqref{eq:rho}. In all cases we use $\alpha=1$, and $\delta=2$ (the so-called canonical choice). The plots show that $\E\{\KL(f_0||f\}\geq\E\{KL(f || f_0)\}$ for all $M$ and for all functions $\rho$. Apart from the singular continuous case, $\rho_2(m) =1$, the variances of $\KL(f_0 || f)$ are also larger that those of $\KL(f || f_0)$. 

We see that for the case of $\rho_1(m) = 1/2^m$, which corresponds to the Dirichlet process, the mean value of the KL and the reverse KL diverge to infinity as $M\to\infty$\footnote{Figure \ref{fig:klmeans} appears to show that $\E\{\KL(f||f_0)\}$ remains constant, but this is an artefact due to the scale.}. The KL (\ref{eq:kl1}) increases at an exponential rate whereas for the reverse KL (\ref{eq:kl2}) the growth rate is constant. As for the standard deviations, that of the KL also diverges as $M\to\infty$, however, that of the reverse KL converges. 

The precision function $\rho_2(m)=1$, which defines a singular continuous random distribution \citep{ferguson:74}, has asymptotic constant expected values for both KL and reverse KL in the limit of $M$. The variance of the KL converges to a finite value when $M\to\infty$, but for the reverse KL the variance increases at a constant rate as a function of $M$. 
In the case of the two continuous processes, obtained with precision functions $\rho_3$ and $\rho_4$, the expected values for KL and the reverse KL converge in the limit, as given by the upper bounds in Corollary \ref{cor:KLbound}. Interestingly, the variances for the two KL divergences are asymptotically constant. 

These results give a precise interpretation of any choice of parametrisation of a P\'olya tree, summarised in the choice to the two parameters $(\alpha,\delta$)\footnote{Here we only consider the class of functions $\rho_3(m)=m^{\delta}$.}. 
The conventional method for informing the parametrisation of a P\'olya tree process is that the choice of $\delta$ is unimportant, with default value 2, and thus the choice of $\alpha$ completely controls the variety of draws from the process. However, these two parameters are confounded and should not be chosen independently.  
As shown in figure \ref{fig:KLcontour}, the expected KL is dependent on both parameters, although choices of $\alpha$ near to zero mean that the exponent $\delta$ has little effect on the expected KL and its variance.

\subsection{Frequentist and Bayesian ``bootstrap''}

Let us now consider the setting where $f_0$ is a discrete density with $n$ atoms $\{\xi_1,\ldots,\xi_n\}$, i.e., 
$f_0(x) = \sum_{i=1}^n p_i \delta_{\xi_i}(x)$, with $p_i > 0$ for all $i=1,\ldots,n$ and $\sum_{i=1}^n p_i=1$. Let $\bw=(w_1,\ldots,w_n)$ be random weights such that $w_i \geq 0$ and $\sum_{i=1}^n w_i =1$ almost surely. Let $f$ be a random distribution defined as a reweighing of the atoms of $f_0$ with the random weights $\bw$. In notation, $f(x)= \sum_{i=1}^n w_i\delta_{\xi_i}(x)$. 

The Kullback-Leibler divergence between $f_0$ and $f$ does not depend on the atoms locations and is given by:
\begin{equation}
\label{eq:kld1}
\KL(f_0 || f) = \sum_{i=1}^n p_i \log \left(\frac{p_i}{w_i}\right),
\end{equation}
and the reverse Kullback-Leibler has the form
\begin{equation}
\label{eq:kld2}
\KL(f || f_0) = \sum_{i=1}^n w_i\log\left(\frac{w_i}{p_i}\right).
\end{equation}

If we take $f_0$ to have uniform weights, such as when $X$ is a random sample from some population and $F_0$ represents the empirical CDF, we first highlight an important property in the relationship between the divergences \eqref{eq:kld1} and \eqref{eq:kld2}. 

\begin{proposition}
\label{prop:kl>12}
Consider the KL divergences \eqref{eq:kld1} and \eqref{eq:kld2}. If $p_i=1/n$ for $i=1,\ldots,n$, then for any given re-weighing vector $\bw$ taken from the simplex $\mathcal{Q}_n : =\{w : w_i\geq 0, \sum_{i=1}^n w_i = 1\}$ we have that 
\[\KL(f_0 || f)\geq KL(f || f_0).\]
\end{proposition}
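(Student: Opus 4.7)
First I would substitute $p_i=1/n$ into \eqref{eq:kld1}--\eqref{eq:kld2} to obtain the clean forms
\[\KL(f_0\|f)=-\log n-\tfrac{1}{n}\sum_{i=1}^n\log w_i\qquad\text{and}\qquad\KL(f\|f_0)=\log n+\sum_{i=1}^n w_i\log w_i,\]
so the target inequality rearranges to $-\sum_i(\tfrac{1}{n}+w_i)\log w_i\ge 2\log n$. Setting $u_i=nw_i$ (so $u_i\ge0$ and $\sum_i u_i=n$), this is equivalent to the single compact claim
\[\phi(\bu):=\sum_{i=1}^n(1+u_i)\log u_i\;\le\;0\]
on the rescaled simplex. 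Note $\phi(\mathbf{1})=0$, which corresponds exactly to $w=f_0$ and to equality in the proposition; the whole content of the statement is therefore that $\bu=\mathbf{1}$ is the \emph{global maximum} of $\phi$ on the constraint set.

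The most direct tool to try is Jensen applied to the concave function $\log$ using the probability weights $\mu_i=(1+u_i)/(2n)$ (which sum to one precisely because $\sum u_i=n$):
\[\phi(\bu)=2n\sum_i\mu_i\log u_i\;\le\;2n\log\!\Big(\tfrac12+\tfrac{1}{2n}\sum_i u_i^2\Big),\]
which, combined with the Cauchy--Schwarz bound $\sum u_i^2\ge n$, recovers equality exactly at the uniform point. The obvious pair of tools for the split $\phi=\sum\log u_i+\sum u_i\log u_i$ --- namely $\sum\log u_i\le 0$ by AM--GM and $\sum u_i\log u_i\ge 0$ by convexity of $x\log x$ --- unfortunately point in opposite directions, so summing them alone does not close the argument.

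The main obstacle is therefore upgrading this near-uniform analysis into a global bound. I would pursue a Lagrangian classification of critical points: stationarity on the simplex yields $\log u_i+u_i^{-1}=\lambda-1$, and since $g(u):=\log u+u^{-1}$ is strictly convex with a unique minimum $g(1)=1$, every level set of $g$ contains at most two points. Any critical configuration of $\phi$ therefore has the two-valued form $(a,\ldots,a,b,\ldots,b)$ with $ka+(n-k)b=n$, reducing the problem to a finite family of one-parameter candidates that can be checked by elementary calculus; combined with the boundary behaviour $\phi(\bu)\to-\infty$ as any $u_i\to 0$, this would identify $\bu=\mathbf{1}$ as the global maximizer. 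A complementary smoothing route exploits the curvature of $(1+x)\log x$, which is concave on $(0,1)$, convex on $(1,\infty)$, with inflection exactly at $x=1$: averaging two coordinates lying on the same side of $1$ should move $\phi$ monotonically upward toward $0$. The delicate step in either route is precisely this inflection at the critical point, which obstructs any blanket convexity or Schur-convexity shortcut and forces the analysis to treat the $u_i<1$ and $u_i>1$ coordinates separately.
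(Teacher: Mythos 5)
Your reduction to $\phi(\bu)=\sum_{i}(1+u_i)\log u_i\le 0$ on $\{u_i\ge 0,\ \sum_i u_i=n\}$ is correct, and your diagnosis is sharper than the paper's own argument, which asserts that $h(\bw)=-\sum_i(1/n+w_i)\log w_i$ is convex with positive second derivative at the uniform point; in fact the summand has second derivative $(1-nw_i)/(nw_i^2)$, which changes sign at $w_i=1/n$ and vanishes there, so $h$ is not convex and the Hessian at $\bw^*$ is zero. But your proposal does not actually prove anything: the Jensen and AM--GM/convexity attempts are correctly discarded, and the two remaining routes (Lagrangian classification of critical points, smoothing) are left at the level of ``should'' and ``would''. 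This gap cannot be closed, because the inequality is false for $n\ge 3$. Carrying your own local analysis one order further makes this visible: writing $u_i=1+x_i$ with $\sum_i x_i=0$,
\[
\phi(\bu)=\frac16\sum_i x_i^3-\frac16\sum_i x_i^4+\cdots,
\]
so $\mathbf 1$ is a fully degenerate critical point whose leading behaviour is governed by the cubic $\sum_i x_i^3$, which has no sign once $n\ge 3$ (it vanishes identically only when $n=2$). Concretely, for $n=3$, $p_i=1/3$ and $\bw=(8/15,\,7/30,\,7/30)$ one computes $\KL(f_0||f)\approx 0.0811 < 0.0842\approx \KL(f||f_0)$; for $n=10$ and $\bw=(0.82,0.02,\ldots,0.02)$ the violation is larger, $\KL(f_0||f)\approx 1.238 < 1.436\approx \KL(f||f_0)$.

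Consequently neither of your proposed routes can terminate with $\bu=\mathbf 1$ as a global maximiser of $\phi$ --- it is not even a local maximiser for $n\ge 3$ --- and the proposition as stated (together with the paper's own proof of it) is incorrect. What survives is the boundary behaviour you note ($\KL(f_0||f)\to\infty$ while $\KL(f||f_0)\le\log n$ always, so the claimed direction holds whenever some $w_i$ is small enough) and the case $n=2$, where the cubic obstruction vanishes and the inequality does appear to hold with leading term $-\tfrac13(u_1-1)^4$. But the pointwise domination over the whole simplex fails on an open set of weight vectors, namely those with a single unusually large atom, so the correct move here is not to finish the proof but to flag the counterexample.
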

\begin{proof}
Let $h(\bw):=\KL(f_0 || f)-\KL(f || f_0)$. Using expressions \eqref{eq:kld1} and \eqref{eq:kld2}, $h(w)$ becomes $h(\bw)=-\sum(1/n+w_i)\log(w_i)$. 
We note that $h(\bw)=0$ at $\bw^*=(1/n,\ldots,1/n)$ and is infinite on all the simplex boundaries. Moreover, $h$ is convex and by straightforward differentiation we see that $h''(\bw^*)$ is positive. The result follows.
\end{proof}

This result is consistent with the results from previous section. However, in this particular discrete setting $\KL(f_0 || f)$ dominates $\KL(f || f_0)$. 

Taking for instance $n\bw\sim\mult(n,\bp)$\footnote{We use this notation to emphasise the fact that $\bw$ represents a random probability mass function, but taking values on the set $\{0,1/n,2/n,..,1\}$. A factor of $n$ is needed for the vector to be distributed according to a multinomial distribution.}, a multinomial distribution with $n$ trials and $n$ categories with probability of success $\bp=(p_1,\ldots,p_n)$, means that the random $f$'s will be centred at $f_0$. It is not difficult to show that $\E(f)=f_0$. Note that if $p_i=1/n$ for $i=1,\ldots,n$ this choice of distribution for the weights $\bw$ coincides with the frequentist bootstrap \citep{efron:79} for which the atoms $\{\xi_i\}$ are replaced by i.i.d. random variables $\{X_i\}$. 

We note that the KL divergence \eqref{eq:kld1} will not in general be defined, as $w_i$ can be zero. In fact, for large $n$ and for $p_i=1/n$ in the previous multinomial choice, approximately one third of the weights will be zero. However, $0\log 0$ is defined by convention as 0, so the reverse KL \eqref{eq:kld2} is well defined. 

\begin{proposition} 
\label{prop:fb}
The expected value of the Kullback-Leibler between a ``bootstrap'' draw $f$, with $n\bw\sim\mult(n,\bp)$, and its centring distribution $f_0$, defined in \eqref{eq:kld2}, has the following upper bound:
\begin{equation}\label{prop:freqUpperBound}
\E\{\KL(f || f_0)\} \leq \sum_{i=1}^n  p_i\log\left(p_i +\frac{1-p_i}{n}\right)- H(\bp) 
\end{equation}
where $H(\bp) = \sum_{i=1}^n p_i \log p_i$, the entropy of the vector $\bp$.
For the special case when $p_i=1/n$, we have $\E\{KL(f || f_0)\} \leq \log\left(2-1/n\right) \leq \log 2$
\end{proposition}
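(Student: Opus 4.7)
The plan is to start from the decomposition
\begin{equation*}
\E\{\KL(f\|f_0)\} = \sum_{i=1}^n \E[w_i \log w_i] - \sum_{i=1}^n \E[w_i]\log p_i = \sum_{i=1}^n \E[w_i \log w_i] - H(\bp),
\end{equation*}
which uses $\E[w_i]=p_i$ (an immediate consequence of $n\bw\sim\mult(n,\bp)$) and the convention $0\log 0 = 0$ on the zero coordinates. The target bound \eqref{prop:freqUpperBound} then reduces to showing, for each $i$, the pointwise inequality $\E[w_i \log w_i] \le p_i \log\!\bigl(p_i + (1-p_i)/n\bigr)$.

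The main obstacle is that $x\mapsto x\log x$ is \emph{convex}, so a direct application of Jensen to $\E[w_i\log w_i]$ produces a lower bound in the wrong direction ($\ge p_i\log p_i$). I would resolve this by introducing the size-biased (tilted) measure $Q_i$ defined by $dQ_i/dP = w_i/p_i$, which is a probability measure since $\E[w_i]=p_i$. Under $Q_i$ the coordinate $w_i$ is strictly positive almost surely, so $\log w_i$ is well-defined, and
\begin{equation*}
\E_P[w_i \log w_i] \;=\; p_i\, \E_{Q_i}[\log w_i].
\end{equation*}
Now I can apply Jensen to the \emph{concave} function $\log$ to obtain
\begin{equation*}
\E_{Q_i}[\log w_i] \;\le\; \log \E_{Q_i}[w_i] \;=\; \log\!\left(\frac{\E_P[w_i^2]}{p_i}\right).
\end{equation*}

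The second moment is straightforward: under $n\bw\sim\mult(n,\bp)$, the marginal of $nw_i$ is $\bin(n,p_i)$, so $\V(w_i)=p_i(1-p_i)/n$ and hence $\E[w_i^2]/p_i = p_i + (1-p_i)/n$. Assembling the pieces and summing over $i$ yields the stated bound \eqref{prop:freqUpperBound}. For the special case $p_i=1/n$, substitution gives $p_i + (1-p_i)/n = (2 - 1/n)/n$, so
\begin{equation*}
\E\{\KL(f\|f_0)\} \;\le\; \sum_{i=1}^n \frac{1}{n}\log\!\left(\frac{2-1/n}{n}\right) - \sum_{i=1}^n \frac{1}{n}\log\!\left(\frac{1}{n}\right) = \log(2-1/n) \le \log 2,
\end{equation*}
completing the proof. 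The only delicate point to flag is the handling of $\{w_i=0\}$: the size-biased measure assigns this set probability zero, and the corresponding contribution to $\E_P[w_i\log w_i]$ is also zero by convention, so the tilting identity holds without qualification.
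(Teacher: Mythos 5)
Your proof is correct and is essentially the paper's argument in different clothing: the size-biased measure $Q_i$ with $dQ_i/dP = w_i/p_i$ applied to $nw_i\sim\bin(n,p_i)$ is exactly the shift to $v_i+1$ with $v_i\sim\bin(n-1,p_i)$ that the paper uses, and both proofs then conclude by Jensen's inequality on the concave logarithm (you evaluate $\E_{Q_i}[w_i]$ via the second moment $\E[w_i^2]/p_i$, the paper via the mean of $v_i+1$; these are the same number). Your tilted-measure phrasing is slightly more abstract and handles the $\{w_i=0\}$ event and the general-$p_i$ case a bit more cleanly than the paper's displayed computation, but there is no substantive difference in the route taken.
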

\begin{proof}
$$\E\{\KL(f || f_0)\} = \sum_{i=1}^n \E\left\{w_i\log w_i\right\} - \sum_{i=1}^n\E\{w_i\}\log p_i.$$
Working on the individual expected values, 
\[ \E(w_i\log w_i ) = \sum _{k=1}^n {n \choose k}p_i^k (1-p_i)^{n-k} \left(\frac{k}{n}\right)\log\left(\frac{ k}{n}\right). \]
From which we get 
$\E(w_i\log w_i ) = (1/n)\E\{\log\left({(v_i+1)}/{n}\right)\}$, with $v_i\sim\bin(n-1,1/n)$.
Using Jensen's inequality we get
$\E\{w_i\log w_i \} \leq  p_i\log\left(p_i +{(1-p_i)}/{n}\right).$
Substituting this into the original sum and using $\E\{w_i\} = p_i$ gives the result.
\end{proof}

An alternative way of making the random $f$'s to be centred around $f_0$ is by sampling weights $\bw$ from a Dirichlet distribution with parameter vector $\bbeta=(\beta_1,\ldots,\beta_n)$ such that $\beta_i=\alpha_n p_i$, $i=1,\ldots,n$, with $\alpha_n>0$ a parameter changing as a function of the number of atoms. This is denoted $\bw\sim\dir(\alpha_n\bp)$.
It is straightforward to prove that $\E(f)=f_0$, and that the form of $\alpha_n$ parametrises the precision, analogous to the P\'olya tree case. 
If we take $\alpha_n = n$, $p_i=1/n$ and replace the atoms $\{\xi_i\}$ by i.i.d. random variables $\{X_i\}$, we obtain the original Bayesian bootstrap proposed by \cite{rubin:81}\footnote{It is interesting to note that in the original work they only consider this special case.}. Sampling from a Dirichlet with parameter vector $\alpha_n\bp$ gives a generalised version of this bootstrap procedure. \cite{ishwaran&zarepour:02} considered this model albeit in a different context. In this new setting, both the $\KL(f_0 || f)$ and the reverse $\KL(f || f_0)$, given in \eqref{eq:kld1} and \eqref{eq:kld2} respectively, are well defined since $w_i \neq 0$ almost surely. Their expected values and variances can be obtained in closed form as functions of $\alpha_n$ and $\bp$.

\begin{proposition}
Let $f$ be a ``generalised Bayesian bootstrap'' draw around $f_0$ with weights $\bw\sim\dir(\alpha_n\bp)$. Then the Kullback-Leibler divergence given in \eqref{eq:kld1} has mean and variance:
\[ \E\{\KL(f_0 || f)\} = H(\bp) - \sum_{i=1}^n p_i \left\{ \psi_0(\alpha_n p_i )  -  \psi_0(\alpha_n) \right\} \]
\[  Var\{\KL(f_0 || f)\}  =  \sum_{i=1}^n p_i ^ 2 \psi_1(\alpha_n p_i) - \psi_1(\alpha_n)\]
where $\psi_0$ and $\psi_1$ are the digamma and trigamma functions.
\end{proposition}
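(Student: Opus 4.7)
The plan is to reduce both statements to standard first- and second-order moments of $\log \bw$ when $\bw\sim\dir(\alpha_n\bp)$. I would begin by splitting the divergence into a deterministic piece and a random linear functional,
\[ \KL(f_0 || f) \;=\; \sum_{i=1}^n p_i\log p_i - \sum_{i=1}^n p_i \log w_i \;=\; H(\bp) - \sum_{i=1}^n p_i\log w_i, \]
so that everything hinges on $\E(\log w_i)$, $\V(\log w_i)$ and $\Cov(\log w_i,\log w_j)$ under the Dirichlet law.

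For the mean, each marginal of a Dirichlet is beta, so $w_i \sim \be(\alpha_n p_i, \alpha_n(1-p_i))$; the same log-beta identity already used in the proof of Proposition \ref{prop:1} gives $\E(\log w_i) = \psi_0(\alpha_n p_i) - \psi_0(\alpha_n)$. Linearity of expectation then delivers the claimed formula for $\E\{\KL(f_0 || f)\}$ immediately.

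For the variance I would expand
\[ \V\{\KL(f_0 || f)\} \;=\; \sum_{i=1}^n p_i^2\,\V(\log w_i) \;+\; \sum_{i\neq j} p_i p_j\,\Cov(\log w_i,\log w_j), \]
and plug in the standard log-Dirichlet identities $\V(\log w_i) = \psi_1(\alpha_n p_i) - \psi_1(\alpha_n)$ and $\Cov(\log w_i,\log w_j) = -\psi_1(\alpha_n)$ for $i\neq j$. Collecting the $-\psi_1(\alpha_n)$ contributions gives $-\psi_1(\alpha_n)\bigl(\sum_i p_i^2 + \sum_{i\neq j}p_i p_j\bigr) = -\psi_1(\alpha_n)\bigl(\sum_i p_i\bigr)^2 = -\psi_1(\alpha_n)$, producing the stated $\sum_i p_i^2\psi_1(\alpha_n p_i) - \psi_1(\alpha_n)$.

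The only non-routine input is the log-Dirichlet covariance, which I would derive from the gamma representation: let $G_i \simind \ga(\alpha_n p_i,1)$, set $S=\sum_k G_k \sim \ga(\alpha_n,1)$, and note that $\bw=(G_1/S,\ldots,G_n/S)$ is independent of $S$. This independence forces $\Cov(\log w_i,\log S)=0$, hence $\Cov(\log G_i,\log S) = \V(\log S) = \psi_1(\alpha_n)$. Substituting into the bilinear expansion
\[ \Cov(\log w_i,\log w_j) = \Cov(\log G_i,\log G_j) - \Cov(\log G_i,\log S) - \Cov(\log G_j,\log S) + \V(\log S), \]
with the first term vanishing by independence of $G_i$ and $G_j$, yields $-\psi_1(\alpha_n)$. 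Beyond this identification, the rest is pure bookkeeping, so I expect no genuine obstacle; the main care is simply in the combinatorial simplification $\sum_i p_i^2 + \sum_{i\neq j}p_i p_j = 1$ at the end.
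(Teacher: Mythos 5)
Your proof is correct and follows essentially the same route as the paper: reduce $\KL(f_0||f)$ to the linear functional $H(\bp)-\sum_i p_i\log w_i$ and apply the standard log-Dirichlet moments $\E(\log w_i)=\psi_0(\alpha_n p_i)-\psi_0(\alpha_n)$, $\V(\log w_i)=\psi_1(\alpha_n p_i)-\psi_1(\alpha_n)$ and $\Cov(\log w_i,\log w_j)=-\psi_1(\alpha_n)$, with the simplification $\sum_i p_i^2+\sum_{i\neq j}p_ip_j=1$. The only difference is that you derive the covariance identity from the gamma representation, which the paper simply states; this is a welcome extra detail, not a different argument.
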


\begin{proof}
This result follows from $\E(\log w_i) = \psi_0(\alpha_n p_i) - \psi_0(\alpha_n)$ and linearity of expectation. The variance follows from $\V(\log w_i) = \psi_1(\alpha_n p_i ) - \psi_1(\alpha_n)$, and $\Cv(\log w_i,\log w_j) = \psi_1(\alpha_n p_i)\delta_{ij} - \psi_1(\alpha_n)$, where $\delta_{ij}$ is the Kronecker delta function taking value $1$ when $i=j$ and 0 otherwise.
\end{proof}

The limiting behaviour of this expected KL and its variance, as $n$ tends to infinity, can more easily be studied for the special case of $p_i=1/n$, $i=1,\ldots,n$. When $\alpha_n=\alpha$, i.e. constant, they both diverge to infinity. In the limit, this is a well known construction of a Dirichlet process, when the atoms are sampled i.i.d. from a baseline measure $G$. However, if we make $\alpha_n$ grow linearly with $n$, say $\alpha_n=\alpha n$, then $\lim_{n\to\infty} \E\{\KL(f_0 || f)\} =  \log(\alpha) - \psi_0(\alpha) $ and $ \lim_{n\rightarrow\infty} Var\{\KL(f_0 || f)\} = 0 $. These values are obtained by noting that $\psi_0(n)$ behaves like $\log(n)$ for large $n$. Finally, if we increase the rate at which $\alpha_n$ grows with $n$, say $\alpha_n=\alpha n^2$, both mean and variance of the KL converge to zero as $n\to\infty$.

\begin{proposition}
Let $f$ be a ``generalised Bayesian bootstrap'' draw around $f_0$ with weights $\bw\sim\dir(\alpha_n\bp)$. Then the Kullback-Leibler divergence given in \eqref{eq:kld2} has mean:
\begin{equation}
 \E \{ \KL(f || f_0) \}  = \sum_{i=1}^n  p_i\left\{ \psi_0(\alpha_n p_i + 1) - \psi_0(\alpha_n + 1)\right\} - H(\bp)
\end{equation}
where $H(\bp) := \sum_{i=1}^n p_i \log p_i$ the entropy of the vector $\bp$, and the variance given by
\begin{multline}\label{VarianceRevKL}
\V\left(\KL(f||f_0)\right) = \sum_{i=1}^n\left\{ \V(w_i \log w_i ) + (\log p_i)^2 \V(w_i) - 2 (\log p_i) \Cv(w_i\log w_i,w_i)\right\} \\
+2\sum_{i<j} \left\{ \Cv(w_i\log w_i,w_j\log w_j) + (\log p_i)(\log p_j) \Cv(w_i,w_j) - 2 (\log p_j) \Cv(w_i\log w_i, w_j ) \right\}
\end{multline}
where each of the elements are given in the footnote\footnote{ 
$\V(w_i ) = {p_i(1-p_i)}/{(\alpha_n+1)}$, $\Cv(w_i,w_j)=-p_i p_j/(\alpha_n+1)$, 
$ \V(w_i \log w_i) = {p_i (\alpha_n p_i +1)}/{(\alpha_n +1)} \{ \psi_1(\alpha_n p_i +2) - \psi_1(\alpha_n +2) + [\psi_0(\alpha_n p_i +2) -\psi_0(\alpha_n+2)]^2 \}
- p_i^2\{ \psi_0(\alpha_n p_i + 1) - \psi_0(\alpha_n+1) \}^2 $, 
$\Cv(w_i \log w_i, w_i) = {p_i(\alpha_n p_i +1)}/{(\alpha_n +1)}\{ \psi_0(\alpha_n p_i + 2) - \psi_0(\alpha_n+2) \} - p_i^2 \{ \psi_0(\alpha_n p_i + 1) - \psi_0(\alpha_n +1) \}$,
$\Cv(w_i \log w_i, w_j) = p_i p_j \{ -{\psi_0(\alpha_n p_i +1)}/{(\alpha_n +1)} + \psi_0(\alpha_n +1) - {\alpha_n \psi_0(\alpha_n + 2)}/{(\alpha_n +1)} \}$, 
$\Cv(w_i \log w_i, w_j \log w_j) = {\alpha_n p_i p_j}/{(\alpha_n + 1)} [\{\psi_0(\alpha_n p_i +1) - \psi_0(\alpha_n +2)\}\{\psi_0(\alpha_n p_j +1) - \psi_0(\alpha_n + 2)\}- \psi_1(\alpha_n + 2)  ] - p_i p_j \{\psi_0(\alpha_n p_i + 1) - \psi_0(\alpha_n + 1)\}\{\psi_0(\alpha_n p_j + 1) - \psi_0(\alpha_n + 1) \}$.}. 
\end{proposition}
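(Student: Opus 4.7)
The plan is to decompose $\KL(f||f_0) = \sum_i w_i \log w_i - \sum_i (\log p_i) w_i$ and to obtain every required moment from a single generating identity. For $\bw\sim\dir(\alpha_n\bp)$, the total mass is $\sum_k \alpha_n p_k = \alpha_n$, and one has
$$\E\left[\prod_{k=1}^n w_k^{s_k}\right] = \frac{\Gamma(\alpha_n)}{\Gamma(\alpha_n + \sum_k s_k)} \prod_{k=1}^n \frac{\Gamma(\alpha_n p_k + s_k)}{\Gamma(\alpha_n p_k)}.$$
Any mixed moment of the form $\E[w_i^{a}w_j^{b}(\log w_i)^{c}(\log w_j)^{d}]$ follows by differentiating $c$ times in $s_i$ and $d$ times in $s_j$ at the shifted point $s_i=a-1$, $s_j=b-1$ and setting the remaining $s_k = 0$, with digamma and trigamma functions arising from $(\log \Gamma)' = \psi_0$ and $(\log \Gamma)'' = \psi_1$.

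For the mean, linearity gives $\E\{\KL(f||f_0)\} = \sum_i \E(w_i\log w_i) - \sum_i (\log p_i)\E(w_i)$. Using $\E(w_i) = p_i$ together with
$$\E(w_i\log w_i) = \frac{d}{ds}\E(w_i^{1+s})\bigg|_{s=0} = p_i\{\psi_0(\alpha_n p_i + 1) - \psi_0(\alpha_n + 1)\},$$
and recalling the paper's convention $H(\bp) = \sum_i p_i\log p_i$, yields the stated expression for the mean.

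For the variance, I would first apply bilinearity to obtain
$$\V\{\KL(f||f_0)\} = \sum_i \V\bigl(w_i\log w_i - (\log p_i)w_i\bigr) + 2\sum_{i<j}\Cv\bigl(w_i\log w_i - (\log p_i)w_i,\, w_j\log w_j - (\log p_j)w_j\bigr),$$
and then expand each inner variance (three quadratic pieces) and covariance (four pieces) to land on exactly the six ingredient moments $\V(w_i)$, $\Cv(w_i,w_j)$, $\Cv(w_i\log w_i, w_i)$, $\V(w_i\log w_i)$, $\Cv(w_i\log w_i, w_j)$, and $\Cv(w_i\log w_i, w_j\log w_j)$ that make up \eqref{VarianceRevKL}. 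The two log-free moments are classical Dirichlet identities; the remaining four are one- or two-derivative instances of the master identity applied to $\E[w_i^{1+s_i}w_j^{1+s_j}]$, each then centred to convert the resulting raw cross-moment into a covariance.

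The main obstacle is purely algebraic bookkeeping. Differentiating the denominator $\Gamma(\alpha_n + r + s_i + s_j)$ with $r\in\{1,2\}$ (depending on whether $w_i$ and $w_j$ appear to the first or second power) produces $\psi_0(\alpha_n + r)$ and $\psi_1(\alpha_n + r)$, and it is this shift that generates the $\alpha_n+1$ versus $\alpha_n+2$ splittings and the prefactors $\alpha_n p_i/(\alpha_n+1)$, $(\alpha_n p_i+1)/(\alpha_n+2)$ recorded in the footnote; similarly the product rule in the double derivative is what creates the $\psi_0\psi_0 - \psi_1$ structure inside $\Cv(w_i\log w_i, w_j\log w_j)$. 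No trivariate or higher-order joint integrals are ever required, because only pairs of indices appear in the quadratic expansion of $\KL(f||f_0)$.
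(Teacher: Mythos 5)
Your proposal is correct and follows essentially the same route as the paper: the mean via $\E(w_i\log w_i)=p_i\{\psi_0(\alpha_n p_i+1)-\psi_0(\alpha_n+1)\}$ (the paper reads this off the marginal $w_i\sim\be\{\alpha_n p_i,\alpha_n(1-p_i)\}$, which is exactly your one-derivative instance of the Dirichlet moment identity) plus linearity, and the variance via the bilinear expansion of $\V\left(\sum_i\{w_i\log w_i-(\log p_i)w_i\}\right)$ with the ingredient moments obtained from (derivatives of) the joint Dirichlet moment formula. One small point your expansion surfaces: the honest cross term is $-(\log p_j)\Cv(w_i\log w_i,w_j)-(\log p_i)\Cv(w_j\log w_j,w_i)$, which agrees with the symmetrised $-2(\log p_j)\Cv(w_i\log w_i,w_j)$ printed in \eqref{VarianceRevKL} only when the $p_i$ are all equal, so for general $\bp$ your version of the formula is the correct one.
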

\begin{proof}
Note that each $w_i \sim \be\{\alpha_n p_i, \alpha_n(1-p_i)\}$ and thus we have that $\E(w_i\log w_i) = p_i\{ \psi_0(\alpha_n p_i + 1) - \psi_0(\alpha_n + 1)\}$. 
Using linearity of expectation and substituting this expression we obtain the mean. Using properties of the variance and covariance of sums we get the second part of the result.
\end{proof}

Similarly to the previous case, if we take $p_i=1/n$ and $\alpha_n=\alpha n$, when $n\to\infty$ then $\E\{\KL(f||f_0)\}\to\psi_0(\alpha+1)-\log(\alpha)$. It is possible to show analytically that each term in (\ref{VarianceRevKL}) goes to zero as $n\rightarrow\infty$, but this can also be seen using the relation between the two KLs given in Proposition \ref{prop:kl>12}, and noting that the variance involves a monotonic transformation, hence we have that $\V\{\KL(f_0||f)\}\geq\V\{\KL(f||f_0)\}$. From the previous result it follows that $\lim_{n\to\infty}\V\{\KL(f||f_0)\}=0$  for these choices of $p_i$ and $\alpha_n$. 

In Figure \ref{fig:kld} we compare the expected value and variance of both KL and reverse KL for $p_i=1/n$ and different values of $\alpha_n$ as a function of $n$. The first column corresponds to $\alpha_n=1$, the second column to $\alpha_n=n$ and the third to $\alpha_n=n^2$, which induce high, moderate and small variance in the $\bw$ respectively. In accordance to what we have proved, the expected value and variance of $\KL(f_0||f)$ are larger than those of $\KL(f||f_0)$, and their limiting behaviours can also be assessed from the graphs. 

If we replace the atoms $\{\xi_i\}$ by i.i.d. random variables $\{X_i\}$ from a distribution $G$ and take $p_i=1/n$ for $i=1,\ldots,n$, then $f_0$ represents the empirical density for the random variables $X_i$'s and $f$ represents a random process centred around the empirical. \cite{ishwaran&zarepour:02} considered exactly this random probability process and derived results for the limiting behaviour for a variety of choices of $\alpha_n$ (see Theorem 3, page 948). Let $F$ be the cdf associated to $f$. When $\alpha_n=\alpha$, then $F$ is distributed according to a Dirichlet process $\DP(\alpha,G)$, in the limit as $n\to\infty$. If $\alpha_n=\alpha n$, then we have almost sure weak convergence of $F$ to $G$, as $n\to\infty$. For the third case considered here, $\alpha_n=\alpha n^2$, $F$ converges in probability to $G$, as $n\to\infty$. 

The case where $\alpha_n=\alpha n$ is of particular interest. Although we have weak convergence of $F\to G$, the random distribution does not converge in KL divergence. In other words, although functionals of $f$ tend to the functionals of $f_0$, the KL divergence between the two densities remains non zero. This becomes apparent when considering the random quantity $n w_i$, which comes into the equation (\ref{eq:kld1}), whose variance becomes asymptotically $1/\alpha$, as $n\to\infty$.
Convergence in Kullback-Leibler is a strong statement, stronger than convergence of functionals and $L_1$ convergence. A more intuitive illustration is the posterior convergence of two Dirichlet processes with different baseline measures (that have the same support). By posterior consistency, both will weakly converge to the same measure, but their $L_1$ divergence will remain finite and their KL divergence will remain infinite. 

\section{Discussion}

This note explores properties of the KL and reverse KL of draws $F$ from some classical random probability models with respect to their centring distribution $F_0$. These properties become relevant when applying a particular process as a modelling tool. For example, draws from the Dirichlet process prior have divergent expected KL (obtained in our P\'olya tree setting with $\rho_1$ in \eqref{eq:rho} and $M\to\infty$, and also obtained in the Bayesian bootstrap setting with $\alpha_n=\alpha$ and $n\to\infty$). Therefore we can say that any random draw taken from a Dirichlet process prior is completely ``different'' from the baseline distribution as measured in terms of the KL divergence, regardless the value of $\alpha$. This is also a surprising result but in accordance with the full support property of the Dirichlet process\footnote{Stated in \citet{ferguson:73}, saying that any fixed density (measure) $g$ absolutely continuous with respect to $f_0$ can be arbitrarily approximated with a draw $f$ from a Dirichlet process.}.

Our key result concerns the P\'olya tree prior. 
In the majority of applications, it is usually constructed in its continuous version, i.e. the precision function $\rho$ satisfies the continuity property, for example $\rho_3$ and $\rho_4$ as given in \eqref{eq:rho}. In these cases, the first two moments of the distance (in KL units) of the draws from their centring measure is given as an explicit function of the truncation level $M$ and the precision function $\alpha\rho(m)$.
Therefore the specification of the truncation level $M$, precision parameter $\alpha$, and precision function $\rho$ are all highly important, with careless choices leading to a prior overly concentrated around $f_0$. The vast majority of applications with P\'olya tree priors use the family $\rho_3(m)$ with choice of exponent $\delta=2$.
In Figure \ref{fig:klar} we show that using $\rho_3(m)$ with a choice of $\delta=1.01$ (empty dots) gives greater gains in expected KL as $M$ is increased as compared to those obtained for the standard choice of $\delta=2$ and decreasing the parameter $\alpha$. The concentration around the baseline measure is highly sensitive to this choice of exponent, thus questioning the ``sensible canonical choice'' of $\delta=2$ given by \cite{lavine:92}.

Moreover, in practice P\'olya trees are used in their finite versions, that is, finite $M$. In such cases the choice of $M$ has been done with a rule of thumb \citep[e.g.][]{hanson:06}, say $M=\log_2(n)$ with $n$ being the data sample size. The authors note 'a law of diminishing returns' when increasing the truncation level from $M\to M+1$. Our study confirms this by plotting the diversity of draws as measured in KL against $M$, and these findings suggest that a P\'olya tree prior with as a low as $M=4$ and $\rho(m)=2^m$ can produce random draws that are equally far from the centring distribution as with a larger $M$ (see two bottom panels in Figures \ref{fig:klmeans} and \ref{fig:klsd}). If it desired to make proper use of finite nature of the tree, the various possibilities in specification of the precision function $\rho$ within families that satisfy the continuity property should be used.

In the discrete setting, we can always see $f_0$ as the empirical density obtained from a sample of size $n$ taken from a continuous density. This is often the case when characterising a posterior distribution in Bayesian analysis, for example via MCMC sampling \citep[e.g.][]{gelman&al:02}. One lesson from this work, is that by increasing $n$, the variance of the reverse KL in the frequentist bootstrap, and the variance of the KL and reverse KL for the Bayesian bootstrap with $\alpha_n=\alpha n$, converge to zero. This implies that for large $n$ a frequentist or Bayesian bootstrap draw lies below $\log(2)$ and exactly at $\log(\alpha)-\psi_0(\alpha)$ or $\psi_0(\alpha+1)-\log(\alpha)$ in KL units, respectively. 

\section*{Acknowledgements}
We are grateful to Judith Rousseau for helpful comments. 
Watson is supported by the Industrial Doctoral Training Centre (SABS-IDC) at Oxford University and Hoffman-La Roche.  
This work was done whilst Nieto-Barajas was visiting the Department of Statistics at the University of Oxford. He is supported by {\it Asociaci\'on Mexicana de Cultura, A.C.}--Mexico. 
Holmes gratefully acknowledges support for this research from the EPSRC and the Medical Research Council.

\bibliographystyle{authordate1}
\bibliography{ref}

\begin{figure}
\centering
\includegraphics[scale=.5]{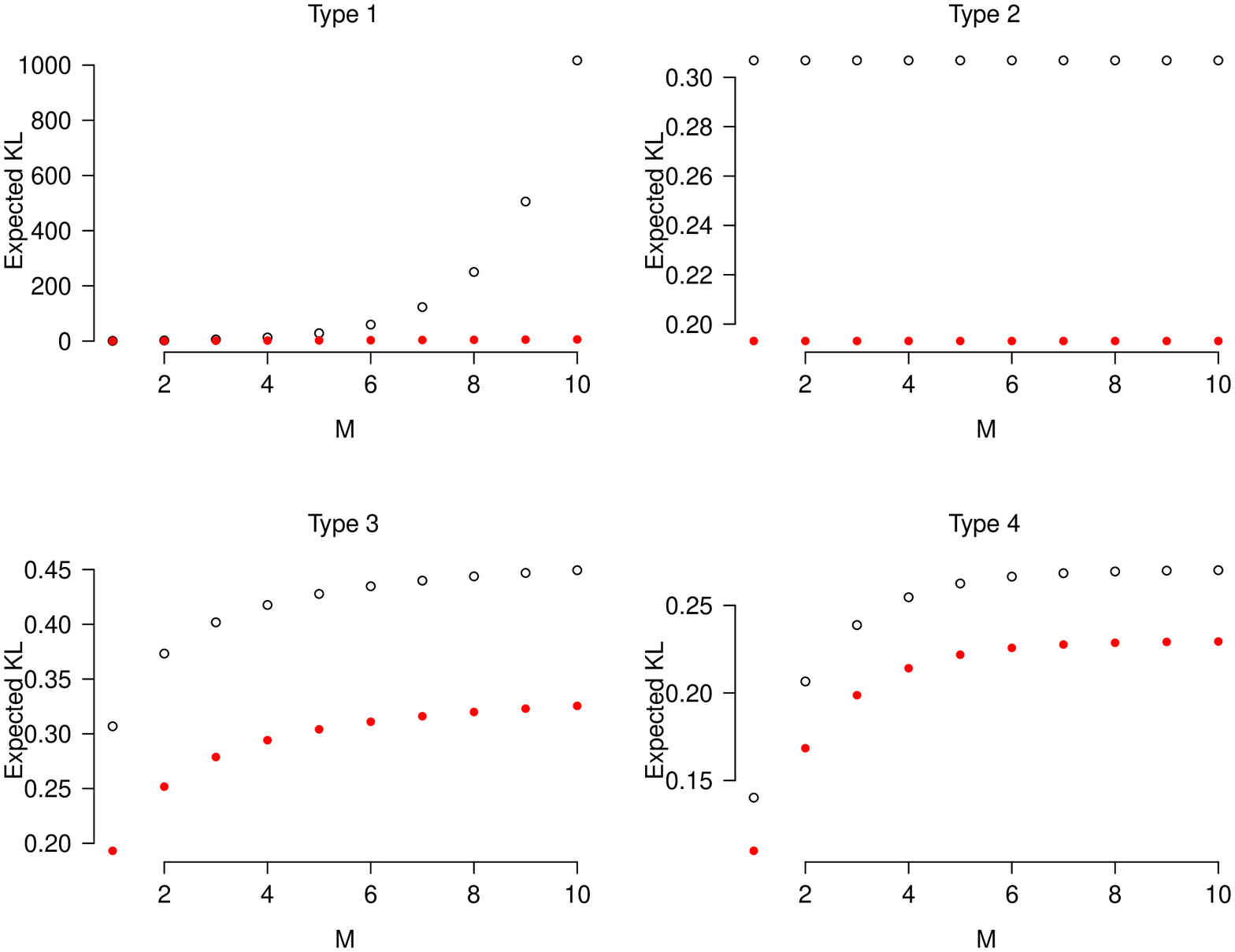}
\caption{Comparison between expected values of KL for different values of $M$. $\E\{\KL(f_0 || f)\}$ (empty dots) and $\E\{\KL(f || f_0)\}$ (solid dots). Type 1 to 4 denote the different $\rho$ functions as in \eqref{eq:rho}.} 
\label{fig:klmeans}
\end{figure}

\begin{figure}
\centering
\includegraphics[scale=.5]{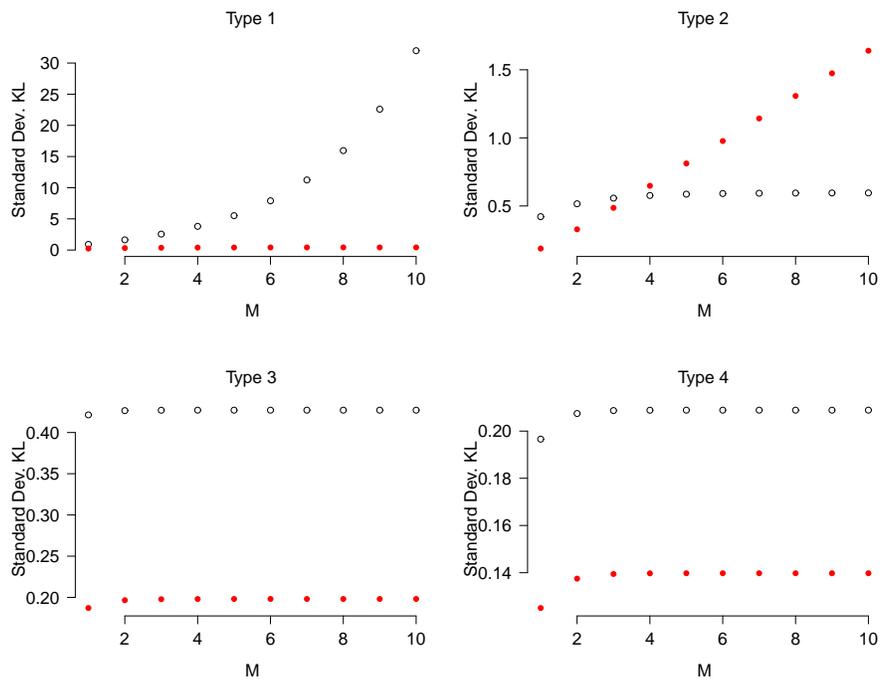}
\caption{Comparison between standard deviations of KL for different values of $M$. $\sqrt{\V\{KL(f_0 || f)\}}$ (empty dots) and $\sqrt{\V\{\KL(f || f_0)\}}$ (solid dots). Type 1 to 4 denote the different $\rho$ functions as in \eqref{eq:rho}.} 
\label{fig:klsd}
\end{figure}

\begin{figure}
\centering
\includegraphics[scale=.5]{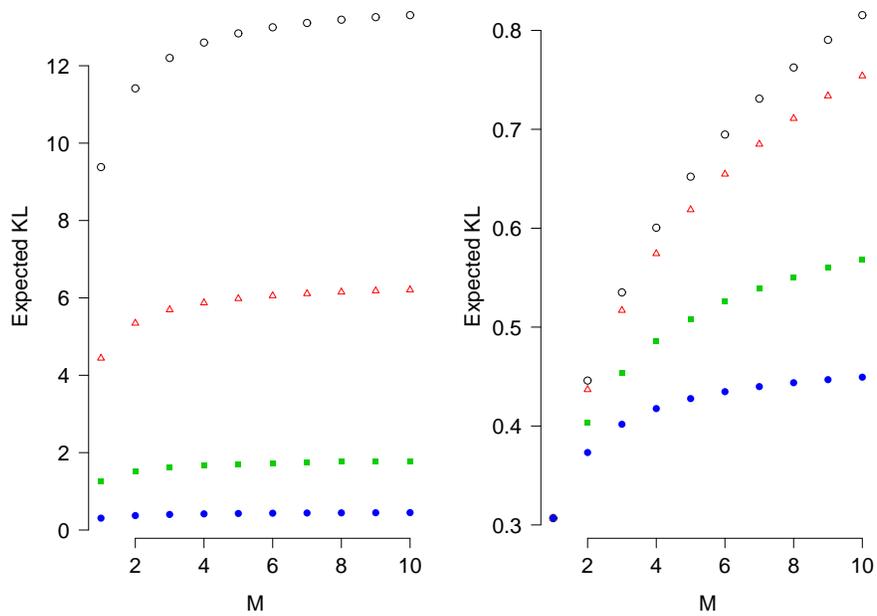}
\caption{Expected $\KL(f_0||f)$ for varying $\alpha$ (left panel) and $\rho(m)=m^{\delta}$ (right panel). Left: $\alpha=0.05$ (empty dots), $\alpha=0.1$ (triangles), $\alpha=0.3$ (squares), $\alpha=1$ (solid dots). Right: $\delta=1.01$ (empty dots), $\delta=1.1$ (triangles), $\delta=1.5$ (squares), $\delta=2$ (solid dots).} 
\label{fig:klar}
\end{figure}

\begin{figure}
\centering
\includegraphics[scale=.5]{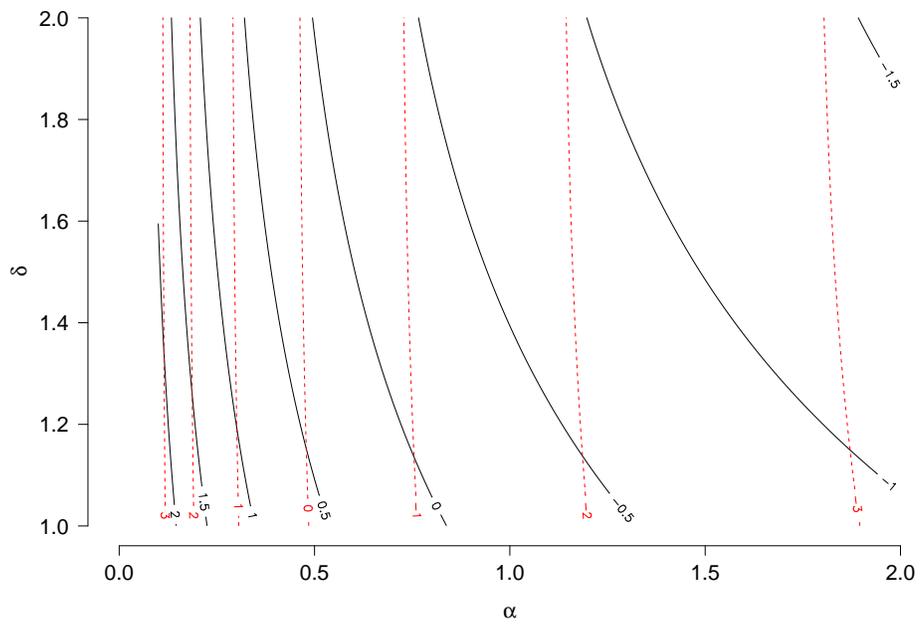}
\caption{Overlaid contour plots of the log-expected KL (black lines) and the log-variance of the KL (dashed red lines) as functions of the two parameters $(\alpha,\delta)$, at regular intervals of 1/2 and 1 respectively, of draws from a P\'olya tree process with truncation level $M=10$.}
\label{fig:KLcontour}
\end{figure}

\begin{figure}
\centering
\includegraphics[scale=.5]{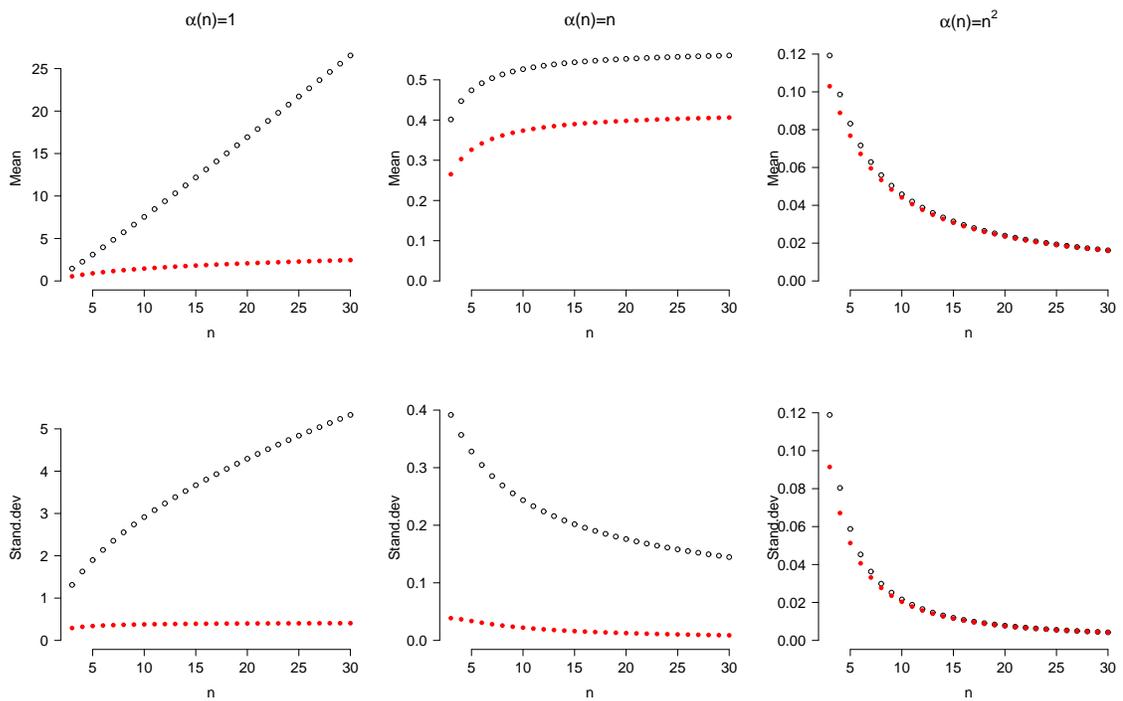}
\caption{Expected value (top row) and standard deviation (bottom row) of $KL(f_0||f)$ (black empty dots) and $\KL( f || f_0 )$ (red solid dots). In columns from left to right: $\alpha_n = 1$, $\alpha_n = n$ and $\alpha_n = n^2$.} 
\label{fig:kld}
\end{figure}

\end{document}